\theoremstyle{plain}
\newtheorem{thm}{Theorem}[section]
\newtheorem{rema}{Remark}[section]
\title{On characteristics of an ordinary differential equation
and a related inverse problem in epidemiology}
\author{Ralph Brinks\\
Institute of Biometry and Epidemiology \\German Diabetes Center\\
Duesseldorf, Germany}
\date{}
\begin{document}

\maketitle

\begin{abstract}
In this work we examine the properties of a recently described
ordinary differential equation that relates the age-specific
prevalence of a chronic disease with the incidence and mortalities
of the diseased and healthy persons. The equation has been used to
estimate the incidence from prevalence data, which is an inverse
problem. The ill-posedness of this problem is proven, too.
\end{abstract}

\emph{Keywords:} Chronic disease; Compartment model; Incidence;
Mortality; Prevalence; Population.

\maketitle

\section{Introduction}
Recently a novel ordinary differential equation (ODE) has been
described that relates the age-specific prevalence of an
irreversible disease with its incidence rate and the mortality
rates of the diseased and the non-diseased persons \citep{Bri11}.
This article is about properties of the ODE and its solutions.

\bigskip

Given the mortality rates and the age-specific prevalence, the ODE
may be used to derive the incidence rate\footnote{In this article
the expressions \emph{rate} and \emph{density} are synonymously
used.}, which can be interpreted as an inverse problem. Inverse
problems are often examined with respect to ill- or
well-posedness. A well-posed of an inverse problem in the sense of
Jacques Hadamard means that a solution exists, that the solution
is unique and stable \citep{Had23}. In this article the
ill-posedness of the inverse problem is proven.

The article is organized as follows. Section \ref{s:methods}
briefly reviews the derivation of the ODE. Then, some properties
of the ODE and its solution are examined. In Section
\ref{s:inverse} the inverse problem is introduced and the
ill-posedness is proven. Finally, in Section \ref{s:Discussion}
the results and its consequences are discussed.

\section{The ODE: Derivation and Properties}\label{s:methods}

A popular framework for studying relations between prevalence and
incidence of a disease is the simple model consisting of three
states as depicted in Figure \ref{fig:3states}: \emph{Normal,
Disease} and \emph{Death}, \citep{Kei91, Mur94}. This model goes
back at least until the 1950s \citep{Fix51}. In general, the
transition densities from one state to another depend on calendar
time $t$ and age $a$. The transition density from \emph{Disease}
to \emph{Death} may also depend on the duration $d$ of the
disease.

\begin{figure*}
\centerline{\includegraphics[keepaspectratio,
width=14cm]{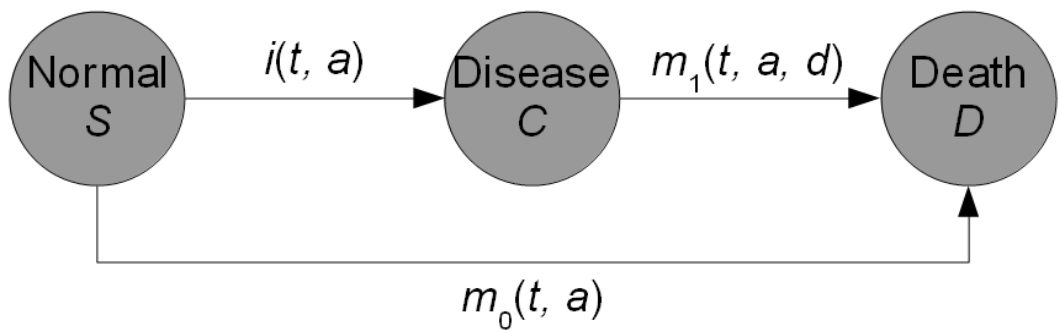}} \caption{Three states model of
normal, diseased and dead subjects. Transition densities may
depend on calender time $t,$ age $a,$ and duration $d$ of the
disease.} \label{fig:3states}
\end{figure*}

People in the population under consideration can get a disease at
incidence density $i$, they can die either after having got the
disease at age-specific mortality rate $m_1$ or without having the
disease at mortality rate $m_0$. The numbers of the individuals in
the \emph{Normal} and in the \emph{Disease} state are denoted by
$S$ (susceptibles) and $C$ (cases). Generally, these numbers as
well as the mortality and incidence densities depend on the
calendar time $t$. Sometimes, these quantities are assumed to be
independent of time $t$, which is referred to as the
time-homogeneous case \citep{Kei91}.

Assuming time-homogeneity and disease related mortality $m_1$ to
be independent of duration $d$, Murray and Lopez
\citeyearpar{Mur94} described transitions along the paths in
Figure \ref{fig:3states} as a set of ODEs. Henceforth, beside
time-homogeneity we additionally assume that the population is
closed (no migration) and has a constant birth rate. Furthermore,
the age-specific functions $i, m_0, m_1$ are non-negative and
continuous in $\left[0, \omega \right]$ for some $\omega > 0$.
Henceforth, $\omega$ is considered as the minimal age when all
members (diseased and non-diseased) of the population are
deceased\footnote{For example by choosing $\omega = \inf \{ a
> 0| ~C(a) + S(a) < 0.5\}$.}. Then, Equation \eqref{eq:MurrayODE}
describes the change rates of the numbers $S$ and $C$ of normal
and diseased individuals, respectively.

\begin{equation}\label{eq:MurrayODE}
\begin{split}
    \frac{\mathrm{d} S}{\mathrm{d} a} &= - \left ( i(a) + m_0(a) \right ) \cdot S\\
    \frac{\mathrm{d} C}{\mathrm{d} a} &= i(a) \cdot S - m_1(a) \cdot C\\
\end{split}
\end{equation}

\bigskip

The resulting set of ODEs is linear and of first order. Due to the
simple structure of the ODEs for given age-specific incidence and
mortality rates $i$, $m_0$ and $m_1$, the analytical solution of
the corresponding initial value problem with initial conditions
$S(0) = S_0 \ge 0, ~C(0) = C_0 \ge 0, ~S_0 + C_0 > 0$ can be
calculated easily:

\begin{equation}\label{eq:SolutionML}
\begin{split}
    S(a) &= S_0 \cdot \exp \left (- \int_0^a i(\tau) + m_0(\tau) \mathrm{d}\tau\right
    ) \\
    C(a) &= \exp \left (- \int_0^a m_1(\tau) \mathrm{d}\tau\right ) \cdot \left
    \{ C_0 + \int_0^a i(\tau) S(\tau) \exp \left ( \int_0^\tau m_1(t) \mathrm{d}t \right
    ) \mathrm{d} \tau \right \}.
\end{split}
\end{equation}

Obviously, from $S_0 + C_0 > 0$ it follows that $S(a) \ge 0, ~C(a)
\ge 0$ and $S(a) + C(a) > 0$ for all $a \in [0, \omega]$.

\bigskip

Usually, for a population under consideration the overall
(general) mortality density $m$ is observed or reported in
life-tables. The mortality density $m$ is a convex combination of
the mortality density $m_0$ of the normals and the mortality
density $m_1$ of the diseased:

\begin{equation}\label{eq:generalMort}
\begin{split}
m(a) &= p(a) \cdot m_1(a) + \left \{ 1 - p \left(a \right) \right
\} \cdot m_0(a) \\ &= m_0(a) \cdot \left \{ p \left(a \right)
\cdot \left( R \left(a \right) - 1 \right) + 1 \right \},
\end{split}
\end{equation}
where $R(a)$ is the relative risk, $R = \tfrac{m_1}{m_0}$. In this
expression $p$ is the prevalence of the disease, which for a
specific age $a$ and $S(a)+C(a) > 0$ can be written as
\begin{equation}\label{eq:agePrev}
p(a) = \frac{C(a)}{S(a)+C(a)}.
\end{equation}

Equation \eqref{eq:generalMort} allows the application of the
ODE-system \eqref{eq:MurrayODE} in the case when $m_0$ and $m_1$
are unknown. In epidemiology this typically is the case. Then the
ODE-system \eqref{eq:MurrayODE} becomes non-linear and does not
have an analytical solution anymore.

\bigskip

Interestingly, the two-dimensional system can be reduced to a
one-dimensional ODE, which is stated in the following Theorem.
This can easily by derived from the quotient rule and Equation
\eqref{eq:MurrayODE}.
\begin{thm}\label{th:central1}
Let mortality densities $m, m_0 \in C^0\left(\left[0, \omega
\right]\right)$ and $S, C \in C^1\left(\left[0, \omega
\right]\right)$ with $S(a) + C(a) > 0$ for all $a \in \left[0,
\omega \right],$ then $p = \frac{C}{S+C}$ is differentiable in
$\left[0, \omega \right ]$ and it holds
\begin{equation}\label{eq:Eq4}
\frac{\mathrm{d} p}{\mathrm{d} a} = (1-p) \cdot \left ( i - \left (m -
m_0 \right ) \right ).
\end{equation}
\end{thm}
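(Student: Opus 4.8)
The plan is to compute $\tfrac{\mathrm{d} p}{\mathrm{d} a}$ directly via the quotient rule and then to eliminate the disease-specific mortality $m_1$ using the convex-combination identity \eqref{eq:generalMort}. First I would dispose of differentiability: since $S, C \in C^1\left(\left[0,\omega\right]\right)$ and the denominator $S+C$ is strictly positive on the compact interval $\left[0,\omega\right]$, the prevalence $p = C/(S+C)$ is a ratio of continuously differentiable functions with non-vanishing denominator, hence itself differentiable (indeed $C^1$) throughout $\left[0,\omega\right]$.

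Next I would apply the quotient rule,
\begin{equation*}
\frac{\mathrm{d} p}{\mathrm{d} a} = \frac{C'(S+C) - C(S'+C')}{(S+C)^2},
\end{equation*}
and substitute the derivatives supplied by the ODE-system \eqref{eq:MurrayODE}, namely $S' = -(i+m_0)\,S$ and $C' = i\,S - m_1\,C$, so that $S'+C' = -m_0\,S - m_1\,C$. Expanding the numerator, the two contributions carrying $m_1 C^2$ cancel and a common factor $S$ can be extracted; grouping the survivors yields
\begin{equation*}
\frac{\mathrm{d} p}{\mathrm{d} a} = \frac{i\,S(S+C) - S\,C\,(m_1 - m_0)}{(S+C)^2}.
\end{equation*}
Dividing through and recognising $\tfrac{S}{S+C} = 1-p$ together with $\tfrac{C}{S+C} = p$ by definition \eqref{eq:agePrev}, this collapses to $\tfrac{\mathrm{d} p}{\mathrm{d} a} = (1-p)\bigl(i - p(m_1 - m_0)\bigr)$.

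The final, and conceptually the only non-routine, step is to remove the mortality $m_1$, which does not appear in the statement. Rearranging the convex combination \eqref{eq:generalMort} gives $m - m_0 = p\,(m_1 - m_0)$, and inserting this into the bracket produces exactly $\tfrac{\mathrm{d} p}{\mathrm{d} a} = (1-p)\bigl(i - (m - m_0)\bigr)$, as asserted. I expect the main (admittedly minor) obstacle to be the bookkeeping in the numerator expansion -- keeping the $m_1$ terms straight so that the $C^2$ contributions cancel -- paired with the recognition that \eqref{eq:generalMort} is precisely the identity needed to trade the unobservable $m_1$ for the observable general mortality $m$.
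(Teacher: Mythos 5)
Your proposal is correct and follows exactly the route the paper indicates for this theorem: apply the quotient rule to $p = C/(S+C)$, substitute the derivatives from the system \eqref{eq:MurrayODE}, and eliminate $m_1$ via the identity $m - m_0 = p\,(m_1 - m_0)$ coming from \eqref{eq:generalMort}. The algebra checks out, so nothing further is needed.
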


Depending on the type of information about the mortality
densities, the ODE \eqref{eq:Eq4} changes its type (Table
\ref{tab:Types}), which is important when solving the ODE. In case
the ODE is linear, an easy analytical solution exists. If the ODE
is of Riccati or Abelian type \citep{Kam83}, a general analytical
solution does not exist. An extensive monograph about Riccati
equations is \citep{Rei72}.

\begin{table*}
 \centering
 \def\~{\hphantom{0}}
 \begin{minipage}{150mm}
  \caption{Types of the ODE \eqref{eq:Eq4} depending on the given mortality.} \label{tab:Types}
  \begin{tabular*}{\textwidth}{@{}l@{\extracolsep{\fill}}l@{\extracolsep{\fill}}l@{\extracolsep{\fill}}}
  \hline
Given mortality  & Right hand side of Eq. \eqref{eq:Eq4} & Type of the ODE\\
\hline
$m, m_0$   & $(1-p) \cdot \left [i - (m - m_0) \right ]$                       & Linear \\
$m, m_1$   & $(1-p) \cdot \left [i - p \cdot \frac{m_1 - m}{1-p} \right ]$     & Linear \\
$m_0, m_1$ & $(1-p) \cdot \left [i - p \cdot (m_1 - m_0)\right ]$              & Riccati\\
$m_0, R$   & $(1-p) \cdot \left [i - p \cdot m_0 \cdot (R - 1) \right ]$       & Riccati\\
$m_1, R$   & $(1-p) \cdot \left [i - p \cdot m_1 \cdot \frac{R - 1}{R}\right ]$& Riccati\\
$m, R$     & $(1-p) \cdot \left [i - m \cdot \frac{p \cdot (R - 1)}{p \cdot (R - 1) + 1}\right]$ & Abelian\\
\hline
\end{tabular*}
\end{minipage}
\vspace*{-6pt}
\end{table*}

\bigskip

The fractions $\tfrac{R - 1}{R}$ and $\tfrac{p \cdot (R - 1)}{p
\cdot (R - 1) + 1}$ in the last two rows, are very well known in
epidemiology. These are the \emph{exposition attributable risk}
(EAR) and the \emph{population attributable risk} (PAR),
respectively, \citep{Woo05}.

\bigskip

Next it is examined, if the solutions of the one-dimensional ODE
\eqref{eq:Eq4} are epidemiologically meaningful, i.e. $p(a) \in
[0, 1]$ for all $a \in [0, \omega].$ For the system
\eqref{eq:MurrayODE} this is clear: the age-specific prevalence
$p(a) = \tfrac{C(a)}{S(a) + C(a)}$ given by the solutions
\eqref{eq:SolutionML} are meaningful for all $a \in [0, \omega]$.
However, it is not obvious that solutions $p$ of \eqref{eq:Eq4}
are between 0 and 1. For the special case that $m_0 = m_1$ -- this
case is called \emph{non-differential mortality} -- it can be
proven directly. Then the solution of \eqref{eq:Eq4} with initial
condition $p(0) = p_0 \in [0, 1]$ is
\begin{equation*}
p(a) = 1 - (1-p_0) \cdot \exp \left (- \int_0^a i(\tau)
\mathrm{d}\tau\right ),
\end{equation*}
and the epidemiological meaningfulness follows immediately. In
case of differential mortality ($m_0 \neq m_1$) epidemiological
meaningfulness cannot not be proven directly, because it has to
include all the different cases of the right hand side of
\eqref{eq:Eq4} in Table \ref{tab:Types}. Instead of a direct proof
we use the correspondence between Equations \eqref{eq:MurrayODE}
and \eqref{eq:Eq4}. Let $N(a) := C(a) + S(a)$ denote the number of
persons alive at age $a$, then it holds $N(a)
> 0$ for all $a \in [0, \omega].$ We augment Equation
\eqref{eq:Eq4} by another ODE in $N$ with $m$ defined in Equation
\eqref{eq:generalMort}:

\begin{equation}\label{eq:Eq4augment}
\begin{split}
\frac{\mathrm{d} p}{\mathrm{d} a} &= (1-p) \cdot \left ( i - \left
(m - m_0 \right ) \right ) \\
\frac{\mathrm{d} N}{\mathrm{d} a} &= - m \cdot N.
\end{split}
\end{equation}

Then we have the following correspondence between the ODE-systems
\eqref{eq:MurrayODE} and \eqref{eq:Eq4augment}:
\begin{thm}\label{th:corr}
\begin{enumerate}
 \item[(A)] If $S(a), C(a)$ are solutions of \eqref{eq:MurrayODE}, then
   $p(a) := \tfrac{C(a)}{S(a) + C(a)}$ and $N(a) := C(a) + S(a)$ are
   solutions of \eqref{eq:Eq4augment}.
 \item[(B)] If $p(a), N(a)$ are solutions of \eqref{eq:Eq4augment}, then
 $S(a) := \{ 1-p(a) \} \cdot N(a)$ and $C(a) := p(a) \cdot N(a)$ are solutions of
 \eqref{eq:MurrayODE}.
\end{enumerate}
\end{thm}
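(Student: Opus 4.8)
The plan is to verify both implications by direct differentiation, treating the convex-combination identity \eqref{eq:generalMort}, $m = p\,m_1 + (1-p)\,m_0$, as the algebraic bridge between the two systems.

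For part (A) the $p$-equation is nothing but Theorem \ref{th:central1}, so only the $N$-equation requires work. I would simply add the two lines of \eqref{eq:MurrayODE},
\begin{equation*}
\frac{\mathrm{d}N}{\mathrm{d}a}=\frac{\mathrm{d}S}{\mathrm{d}a}+\frac{\mathrm{d}C}{\mathrm{d}a}=-m_0 S-m_1 C,
\end{equation*}
and rewrite the right-hand side with $1-p=S/N$ and $p=C/N$ as $-\bigl((1-p)m_0+p\,m_1\bigr)N$, which equals $-mN$ exactly by \eqref{eq:generalMort}.

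For part (B) I would first record that, since \eqref{eq:MurrayODE} contains $m_1$ whereas \eqref{eq:Eq4augment} contains only $m$ and $m_0$, the disease mortality $m_1$ is understood to be recovered through \eqref{eq:generalMort}. Writing $S=(1-p)N$ and $C=pN$ and applying the product rule together with the two equations of \eqref{eq:Eq4augment}, the $S$-equation is immediate: the two occurrences of $m$ cancel and one is left with $-(i+m_0)(1-p)N=-(i+m_0)S$. For $C=pN$ the same computation gives
\begin{equation*}
\frac{\mathrm{d}C}{\mathrm{d}a}=(1-p)\,iN-mN+(1-p)m_0N,
\end{equation*}
and here \eqref{eq:generalMort} does the decisive work once more: the combination $-mN+(1-p)m_0N$ collapses to $-p\,m_1N=-m_1C$, so that the right-hand side becomes $iS-m_1C$, as required.

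The computations are elementary, so the only genuine care is bookkeeping. The single place where \eqref{eq:generalMort} is truly essential is the $C$-equation of part (B); everywhere else it only serves to repackage $-mN$. I would also note that $N=S+C>0$ throughout guarantees that $p=C/N$ is well defined and that the substitutions $S=(1-p)N$, $C=pN$ are legitimate, and that the endpoint values $p\in\{0,1\}$ cause no trouble, since $m_1$ then drops out of the very equation in which it would otherwise appear.
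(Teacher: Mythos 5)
Your proof is correct and follows exactly the route the paper intends when it dismisses the result as ``an easy exercise in calculus'' without giving any details: direct differentiation of the substitutions, with \eqref{eq:generalMort} serving as the algebraic bridge. Your explicit remarks that \eqref{eq:generalMort} is genuinely needed only in the $C$-equation of part (B), and that $N>0$ and the endpoint cases $p\in\{0,1\}$ cause no difficulty, are accurate and usefully make precise what the paper leaves implicit.
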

\begin{proof}
This is an easy exercise in calculus.
\end{proof}

From Theorem \ref{th:corr} (B) it follows that $p(a) \in [0, 1]$
for all $a \in [0, \omega]:$ If $p(a)$ is a solution of
\eqref{eq:Eq4augment}, then it has a representation $p(a) =
\tfrac{C(a)}{N(a)}$. Since $0 \le C(a) \le N(a),$ the solution $p$
is epidemiologically meaningful.

\begin{rema}
From Theorem \ref{th:corr} (A) it is obvious that ODE-system
\eqref{eq:MurrayODE} implies $\tfrac{\mathrm{d} N}{\mathrm{d} a} =
- m \cdot N$. This is the defining equation of a \emph{stationary
population}, which is a population with a special type of age
distribution \citep{Pre82}. Hence, \eqref{eq:MurrayODE} is valid
only for stationary populations. Since most populations are
non-stationary, this is a heavy limitation. However, it can be
shown that \eqref{eq:Eq4} holds true in general populations as
long as certain restrictions on the migration rate are fulfilled.
Details are not subject of this work and will be elaborated in a
subsequent paper.
\end{rema}

\section{The Inverse Problem}\label{s:inverse}
A key application for the ODE \eqref{eq:Eq4} is the derivation of
the age-specific incidence rate $i(a)$ from $p(a)$ if the
mortalities (or any equivalent information in the first column of
Table \ref{tab:Types}) are known. In epidemiology, typically
incidences rates are surveyed in follow-up studies, which may be
very lengthy and costly. If the model assumptions for ODE
\eqref{eq:Eq4} hold true, the equation can be solved for $i(a)$.
Beside mortality information, the age course of the prevalence has
to be known, which can be obtained from relatively cheap
cross-sectional studies. An example is shown in \citep{Bri11}.

In such an application with given mortalities, we conclude from an
effect (the prevalence) to the underlying cause (the incidence),
which can be interpreted as an \emph{inverse problem}
\citep{Tar05}. The inverse problem is opposed to the \emph{direct
problem} of inferring from the incidence (i.e. the cause) to the
prevalence (the effect) by ODE \eqref{eq:Eq4}. Now we show that
the inverse problem is ill-posed in the sense of
\citeauthor{Had23} \citeyearpar{Had23}. For given (sufficiently
smooth) mortalities and $p_0 \in [0, 1]$ define the operator $\wp:
C^0([0, \omega]) \rightarrow C^1([0, \omega]), i \mapsto p,$ such
that $p(0) = p_0$ and $p$ is the solution of \eqref{eq:Eq4}. To
show that the inverse problem is ill-posed we prove that
$\wp^{-1}: p \mapsto \tfrac{\mathrm{d}p/\mathrm{d}a}{1-p} + m -
m_0$ is discontinuous. It is sufficient to show this for the
special case of non-differential mortality ($m = m_0$). Let
$C^k([0, \omega]), ~k=0,1,$ be equipped with the maximum norm
$\lVert \cdot \rVert$. Choose $p \in C^1([0, \omega]), ~\epsilon >
0$ and define $p_{\epsilon, n}(a) := p(a) + \epsilon \cdot \sin
(n\cdot a)$. Then, it is $\lVert p - p_{\epsilon, n} \rVert \le
\epsilon$ and
\begin{align*}
\lVert \wp^{-1}(p) - \wp^{-1}(p_{\epsilon, n}) \rVert &= \Bigl
\lVert \frac{\mathrm{d}p}{\mathrm{d}a} \frac{1}{1-p} -
\frac{\mathrm{d}p_{\epsilon, n}}{\mathrm{d}a}
\frac{1}{1-p_{\epsilon, n}} \Bigr \rVert \\
&= \Biggl \lVert  \frac{\frac{\mathrm{d}p}{\mathrm{d}a} \epsilon
\sin(n \cdot)
  + \epsilon \, n \cos(n \cdot) (1-p)}{(1-p)(1-p - \epsilon \sin(n \cdot))} \Biggr
  \rVert.
\end{align*}
For $\epsilon_n := n^{-1/2}$ and $p(a) \neq 1$ the term
$\epsilon_n \, n \cos(n a) (1-p(a))$ is unbounded as $n
\rightarrow \infty,$ which implies that $ \wp^{-1}$ is
discontinuous and the inverse problem is ill-posed.

\section{Discussion} \label{s:Discussion}
By extending the framework of \citep{Mur94} for studying the
relation between prevalence and incidence, it had been found that
prevalence, incidence and mortality are linked by a simple
one-dimensional ODE. In this article it has been shown that the
solutions of this ODE are epidemiologically meaningful. Depending
on the type of mortality information available, the ODE changes
its type, which has implications about existence of general
analytical solutions. In many epidemiologically relevant cases, an
analytical solution does not exist, and numerical treatment has to
used instead.

\bigskip

An important application of the ODE is the derivation of
age-specific incidences from the age distribution of the
prevalence. This article shows that this question can be
interpreted as an inverse problem. Furthermore, the inverse
problem is ill-posed. The proof of the ill-posedness shows that an
additive high frequency distortion ($\epsilon \sin(n \cdot)$) of
the prevalence may lead to an unbounded inaccuracy in the derived
incidence. However, high frequency distortions might be unlikely
in real chronic diseases. Hence, the consequences in practical
epidemiology are unclear so far.

\bigskip

In the discussed ODE model, several assumptions have been made.
The ODE is valid only if incidence and mortality rates are
independent from calendar time. Due to changes in medical
progress, hygiene, nutrition and lifestyle, mortality does undergo
secular trends. Thus, it is appropriate to formulate Equation
\eqref{eq:Eq4} as a partial differential equation, which is
subject of a subsequent paper. Moreover, in real diseases
mortality of the diseased persons depend on the duration of the
disease. An example is diabetes, where the relative mortality over
the diabetes duration is U-shaped \citep{Car08}. Duration
dependency obfuscates the relation between prevalence, incidence
and mortality \citep{Kei91}. Results as easy as presented here are
unlikely not be expected.

A last note refers to the term \emph{chronic disease}: In this
article, \emph{chronic} means \emph{irreversible}, i.e. there is
no way back from the \emph{Disease} to the \emph{Normal} state.
However, most of the results presented here remain true, if there
is remission back to state \emph{Normal}. Then, the fundamental
ODE \eqref{eq:Eq4} has an additional term that depends on the
remission rate \citep{Bri11}.

\end{document}